\definecolor{linkcol}{rgb}{0,0,0.4} 
\definecolor{citecol}{rgb}{0.5,0,0} 
\tikzstyle{vertex}=[circle, draw]
\tikzstyle{bigvertex}=[circle, draw, radius=2]
\tikzstyle{tvertex}=[regular polygon, regular polygon sides=3,inner sep=2pt, draw]
\tikzstyle{rvertex}=[rectangle, inner sep=2pt, draw]
\tikzstyle{edge}=[draw,-]
\def\NAT@spacechar{~}
\newcommand{\problem}[5]{
\hbox{\vbox{
\begin{quote}
  \label{#5}
  \ifthenelse{\equal{#3}{}}{}{{#3}\ifthenelse{\equal{#4}{}}{}{ ({#4})}}
  \vspace{-2mm}
  \begin{compactdesc}
    \item [Input:] {#1}
    \item [Question:] {#2}
  \end{compactdesc}
\end{quote}
}}
%\newline
}
\newcommand{\paraproblem}[6]{
\hbox{\vbox{
\begin{quote}
  \label{#6}
  \ifthenelse{\equal{#4}{}}{}{{#4}\ifthenelse{\equal{#5}{}}{}{ ({#5})}}
  \vspace{-2mm}
  \begin{compactdesc}
    \item [Input:] {#1}
    \item [Question:] {#2}
    \item [Parameter:] {#3}
  \end{compactdesc}
\end{quote}
}}
}
\newtheorem{theorem}{Theorem}
\newtheorem{lemma}{Lemma}
\theoremstyle{definition}
\newtheorem{const}{Construction}
\def\OHL{OHL\xspace}
\def\OHLfull{\textsc{Optimal Hub Labeling}\xspace}
\def\VCfull{\textsc{Vertex Cover}\xspace}
\def\VC{\VCfull}
\def\papertitle{Optimal Hub Labeling is NP-complete}
\def\paperauthors{Mathias Weller}
\def\papersubject{Distance Labeling}
\title{\papertitle}
\author{\paperauthors%
\thanks{Supported by the CNRS, research project KERNEL.}}%
\affil{LIRMM, Université Montpellier II, France\\ \texttt{mathias.weller@lirmm.fr}}
\begin{document}

\maketitle

\def\uniL{\ensuremath{\tilde{u}_\mathrm{L}}}
\def\uniR{\ensuremath{\tilde{u}_\mathrm{R}}}
\def\VL{\ensuremath{V_\mathrm{L}}}
\def\VR{\ensuremath{V_\mathrm{R}}}

\def\pot#1{\ensuremath{2^{#1}}}

\tikzstyle{uniL}=[]
\tikzstyle{uniR}=[fill=gray!40]
\tikzstyle{hub}=[edge, -latex, thick, dashed]
\newcommand{\hashub}{\raisebox{1mm}{\tikz{\draw[hub] (0,0) -- (.7,0);}}}

\newcommand{\numL}{\gamma}
\newcommand{\numR}{\gamma}

\def\vertgadget#1#2#3#4{%
  \node[vertex] (u#3) at (#1,#2) {};
  \node[vertex] (v#3) at ($(u#3)+(#4:1)$) {} edge (u#3);
  \node[tvertex] (w#3) at ($(v#3)+(#4:1)$) {} edge (v#3);
}
\def\vertgadgetlabeled#1#2#3#4#5{%
  \node[vertex, label=above:$#5_1$] (u#3) at (#1,#2) {};
  \node[vertex, label=above:$#5_2$] (v#3) at ($(u#3)+(#4:1)$) {} edge (u#3);
  \node[tvertex, label=above:$#5_3$] (w#3) at ($(v#3)+(#4:1)$) {} edge (v#3);
}

\begin{abstract}
  \noindent
  Distance labeling is a preprocessing technique introduced by Peleg~[Journal of Graph Theory, 33(3)] %:167--176]
  to speed up distance queries in large networks.
  Herein, each vertex receives a (short) label and, the distance between two vertices can be inferred from their two labels.
  One such preprocessing problem occurs in the hub labeling algorithm [Abraham et al., SODA'10]: the label of a vertex~$v$ is a set of vertices~$x$ (the ``hubs'') with their distance~$d(x,v)$ to~$v$ and the distance between any two vertices $u$ and~$v$ is the sum of their distances to a common hub.
  The problem of assigning as few such hubs as possible was conjectured to be NP-hard, but no proof was known to date. We give a reduction from the well-known \VCfull problem on graphs to prove that finding an optimal hub labeling is indeed NP-hard.
\end{abstract}

\section{Introduction}\label{sec:intro}

Finding shortest paths quickly is an essential part of many real-world businesses like maps services and programming of mobile GPS navigation devices. In these applications, Dijkstra's algorithm proved much too slow, especially since typical road maps contain millions of vertices and edges. However, since road maps undergo little changes, a preprocessing based approach seems reasonable.
%\todo{more examples of preprocessing-based shortest path algos} 
In this spirit, \citet{journal-Pel00} suggested constructing a distributed data structure in advance, allowing later distance queries to be answered in sublinear time. This distributed data structure comprises a (short) label for each vertex such that the distance between two vertices can be inferred from their respective labels.
%
%used a heuristic preprocessing that, subsequently, allows answering shortest-path queries in sublinear time by labeling each vertex with a list of ``hubs'' such that, for each two vertices $u$ and~$v$, a vertex on some shortest $u$-$v$-path is on both the label of~$u$ and the label of~$v$.
%
Previous work on distance labeling has been focused on finding small labels in polynomial time on a variety of graph classes~%
\cite{journal-BG05%}: $O(\log n)$-size labels in~$O(n)$ time on permutation graphs 
,journal-CDV06%}: $O(\log^2 n)$-size labels in~$O(n^2\log n)$ time for some strange class of graphs
,conf-GP03%}: $O(\log n)$-size labels in $O(n)$ time on interval graphs
,journal-GP03%}: 
,journal-GPPR04%}
,conf-AKM01%
}
or on general graphs~%
\cite{journal-TZ05%
,journal-GPPR04%
} (see also the survey of \citet{journal-GPel03}).

%\cite{conf-AFGW10}: conjecture NP-hardness
\citet{conf-AFGW10,conf-ADGW11} described the following preprocessing: each vertex~$v$ receives, as a label, a list of vertices~$x$ with their distance~$d(x,v)$ to~$v$ such that, for each two vertices~$u$ and~$v$, there is a vertex (the ``hub'') on a shortest $u$-$v$-path that is in both, the label of~$u$ and the label of~$v$. Once all labels have been computed, the distance between~$u$ and~$v$ is the minimum over all vertices~$x$ occurring in the intersection of the labels of~$u$ and~$v$ of the sum of $d(x,u)$ and~$d(x,v)$.
In this work, we focus on the computational complexity of this preprocessing. In particular, when the goal is to minimize the overhead storage needed for the labeling, it involves solving (the optimization variant of) the following problem.

% problem command. Syntax: \problem{input}{question}{full name}{short name}{label}
\problem%
{A graph~$G=(V,E)$ and an integer~$k$.}
{Is there an assignment~$\ell:V\to\pot{V}$ such that~$\sum_{v\in V}|\ell(v)|\leq k$ and for all $u,v\in V$, some vertex of some shortest $u$-$v$-path in~$G$ is in $\ell(u)\cap\ell(v)$ (specifically, we allow~$u=v$, thereby requiring~$\ell$ to be reflexive)?}
{\OHLfull}{\OHL}{def:OHL}

\looseness=-1
\noindent While \citet{conf-AFGW10} conjectured that \OHL is NP-hard, no hardness-proof was known to date. We present a polynomial-time reduction of the well-known \VCfull problem to \OHL, thereby demonstrating its NP-hardness. Since an optimal hub-labeling can be verified by computing the lengths of all shortest paths and comparing them to the distances of each vertex to its hubs, \OHL is also contained in NP, implying NP-completeness for the problem.
Our work falls in line with \citet{report-BCKKW10} who proved various exact preprocessing problems NP-hard that were designed to speed up routing or distance queries. To motivate, they point out that the majority of the known results in this area are heuristic and only few results about exact computational complexity are known. \citet{journal-CHKZ03} provide an exception to this observation, proving that a preprocessing variant called \emph{2-hop cover} that is quite similar to hub labeling is NP-hard. Unfortunately, we were unable to reuse their prove to prove \OHL NP-hard.

%Finally, a related distance labeling algorithm is based on \emph{2-hop covers}: given a set of paths~$P_{uv}$ for each vertex pair $(u,v)$, a 2-hop cover is an assignment~$\ell$ of paths to vertices such that, for each~$u,v$, concatenating some~$p_u\in\ell(u)$ with some~$p_v\in\ell(v)$ constitutes a path in~$P_{uv}$. \citet{journal-CHKZ03} %: directed graphs, given collection of paths to be covered $\leadsto$ NP-hard
%showed that recognizing directed graphs that admit small 2-hop covers is NP-hard.

\subsection{Preliminaries}
\looseness=-1
Let~$G=(V,E)$ be a graph and let~$\ell:V\to\pot{V}$ be a mapping. We say that~$\ell$ \emph{covers} a shortest path~$p$ between two vertices~$u,v\in V$ \emph{with} a vertex~$x$ if~$x$ is on~$p$ and~$x\in\ell(u)\cap\ell(v)$. If~$x$ is either clear from context or unknown, then we simply say $p$ is \emph{covered by}~$\ell$. If, for each $u,v\in V$ some shortest $u$-$v$-path in~$G$ is covered by~$\ell$ (including degenerate cases where~$u=v$), then we call~$\ell$ a \emph{hub-labeling} of~$G$. When clear from context, we drop the suffix ``of~$G$''.
Slightly abusing notation, we identify~$\ell$ with the set of pairs~$(x,y)$ with~$y\in\ell(x)$. Thus, the \emph{size} of~$\ell$ is~$|\{(x,y)\mid y\in\ell(x)\}|$, denoting the total number of assignments of~$\ell$. Finally, $\ell$ is said to be \emph{optimal} if no hub labeling of~$G$ is strictly smaller than~$\ell$.
We use~$\ell^{-1}$ to denote~$\{(x,y)\mid (y,x)\in\ell\}$. For brevity, we abbreviate size-2 sets~$\{u,v\}$ to~$uv$ and sets~$\{1,2,\ldots,i\}$ to~$[i]$.
%For any~$v$, we abbreviate the phrase ``$v_i$ or~$v_j$'' to~``$v_{i,j}$'' and we describe that some~$x$ is assigned as hub to~$v$ by $v\hashub x$.

\section{Detailed Reduction}
In this section, we give the reduction of \VCfull to \OHLfull, explain details, and prove its correctness. To this end, we establish a general form that optimal solutions for the created instance of \OHL can be assumed to have. We show that the way in which the shortest paths are covered corresponds to a vertex cover of the input graph.

\begin{const}\label{const:reduction}
  Let~$(G'=(V',E'),k')$ be an instance of \VC and let~$\gamma:=8|V'|+3|E'|+k'+2$. We construct an instance~$(G=(V,E),k)$ of \OHL as follows. 
  \begin{inparaenum}[1.]
    \item Add $\gamma$ new isolated vertices~$w_i$,
    \item add a new universal vertex~$w$,
    \item rename each~$v\in V'$ to~$v_1$,
    \item add a private neighbor~$v_2$ to each~$v_1$, and
    \item add a private neighbor~$v_3$ to each~$v_2$.
  \end{inparaenum}
  More formally,
  \begin{align*}
    W  & := \{w_1,w_2,\ldots,w_\gamma\}\\
    V  & := \{w\}\cup W\cup\{v_1,v_2,v_3 \mid v\in V'\}\\
    E  & := \bigcup_{v\in V'}\{wv_1,v_1v_2,v_2v_3\}\cup\bigcup_{x\in W}\{wx\}\cup\bigcup_{uv\in E'}\{u_1v_1\}
  \end{align*}
  Finally, let $k := 3\gamma-1$.
  An example of the construction is sketched in \autoref{fig:reduction}.
\end{const}

\def\decorate#1#2{%
  \pgfmathsetmacro{\decoa}{#2+25}
  \pgfmathsetmacro{\decob}{#2-20}
  \pgfmathsetmacro{\decoc}{#2-50}
  % draw decoraiotns
  \path[edge] (#1) -- ++(\decoa:0.4);
  \path[edge] (#1) -- ++(\decob:0.5);
  \path[edge] (#1) -- ++(\decoc:0.4);
}
\def\vertgadgettrue#1#2#3#4{%
  \vertgadget{#1}{#2}{#3}{#4}
  \decorate{u#3}{180-#4}
  \draw (w#3) edge[hub, bend right] (v#3);
  \draw (w#3) edge[hub, bend left] (u#3);
  \draw (v#3) edge[hub, bend right] (u#3);
}

\def\vertgadgetfalse#1#2#3#4{%
  \vertgadget{#1}{#2}{#3}{#4}
  \decorate{u#3}{180-#4}
  \draw (u#3) edge[hub, bend left] (v#3);
  \draw (w#3) edge[hub, bend right] (v#3);
}

\begin{figure}[t]
  \centering
  \subfloat[]{
    \begin{tikzpicture}[scale=.6, node distance=6mm, bend angle=50]
      \vertgadget{0}{0}{1}{0}
      \vertgadgetlabeled{2}{2.8}{2}{0}{u}
      \vertgadget{4}{0}{3}{0}
      \vertgadget{8}{0}{4}{0}

      \draw[edge] (u1) -- (u2) -- (u3);
      \draw[edge] (u1) edge [bend right] (u3);
      \draw[edge] (u3) edge [bend right] (u4);
    \end{tikzpicture}
    \label{fig:reduction}
  }
  \hspace{15mm}
  \begin{tabular}[b]{c}
  \subfloat[]{
    \begin{tikzpicture}[node distance=6mm]
      \vertgadgettrue{0}{0}{1}{0}
    \end{tikzpicture}
    \label{fig:vg true}
  }\\[10mm]
  \subfloat[]{
    \begin{tikzpicture}[node distance=6mm]
      \vertgadgetfalse{0}{0}{1}{0}
    \end{tikzpicture}
    \label{fig:vg false}
  }
  \end{tabular}

  \caption[Example of Construction]{\autoref{fig:reduction} shows the result of \autoref{const:reduction} for $G'=$
    \raisebox{-2pt}{
      \resizebox{!}{12pt}{
        \begin{tikzpicture}[scale=.6]
          \node[vertex] (a) at (7,3) {};
          \node[vertex] (b) at (7.5,3.7) {} edge (a);
          \node[vertex] (c) at (8,3) {} edge (a) edge (b);
          \node[vertex] (d) at (9,3) {} edge (c);
        \end{tikzpicture}
      }
    }\hspace{-1.2mm}.
  Vertices $w$ and~$w_i$ are omitted. The closed neighborhood of any triangular vertex ($v_3$ for each~$v\in V'$) is a proper subset of the closed neighborhood of the adjacent round vertex.
  \autoref{fig:vg true} and \ref{fig:vg false} show possible partial solutions $\ell_v$ (dashed arcs) for the ``vertex gadget'' of~$v\in V'$: \autoref{fig:vg true} represents choosing~$v$ for the vertex cover, \autoref{fig:vg false} represents not choosing~$v$.}
  \label{fig:vertex gadget}
\end{figure}

\looseness=-1
Given a hub labeling~$\ell$ for~$G$, for each~$v\in V'$, we define~$\ell_v:=\ell\cap\{(v_i,v_j)\mid i,j\in[3] \wedge i\ne j\}$ to denote the set of non-reflexive assignments of~$\ell$ in the vertex gadget of~$v$ and, for each~$uv\in E'$, we define~$\ell_{uv}:=\ell\cap\{(u_i,v_j),(v_j,u_i)\mid i,j\in[3]\}$ to denote the set of assignments of~$\ell$ between the vertex gadgets of~$u$ and~$v$.
Note that, for each~$v\in V'$, $|\ell_v|\geq 2$ since a single assignment cannot cover the shortest paths~$(v_1,v_2)$ and~$(v_2,v_3)$ (see \autoref{fig:vg true} and \ref{fig:vg false}). Likewise, for all~$uv\in E'$, $|\ell_{uv}|\geq 3$ since, for each~$i\in [3]$ the unique shortest $u_i$-$v_i$-path in~$G$ requires a different assignment between the vertex gadgets of~$u$ and~$v$ (see \autoref{fig:edge gadget}).

\begin{lemma}\label{obs:normalized}
  Let~$(G,k)$ be a yes-instance of \OHL constructed by \autoref{const:reduction}.
  Then, there is an optimal hub labeling~$\ell$ for~$G$ such that
  \begin{compactenum}[(1)]
    \item\label{obs:subsets} 
      for all~$u,v\in V$ with~$N_G[u]\subset N_G[v]$, $u\notin\ell(v)$,
    \item\label{obs:universals}
      $\ell^{-1}(w)=V$ and
      $\ell^{-1}(x) = \{x\}$ for all~$x\in W$,
%    \item\label{obs:v01}
%      for all~$v\in V'$, $v_1\in\ell(v_2)\iff v_2\notin\ell(v_1)$,
%    \item\label{obs:v12}
%      for all~$v\in V'$, $v_1\notin\ell(v_2)\Rightarrow v_1\notin\ell(v_3)$,
    \item\label{obs:vertex internal}
      for all~$v\in V'$, $v_1\in\ell(v_2)\Rightarrow |\ell_v|>2$, and
    \item\label{obs:VC}
      for all~$uv\in E'$, $u_1\notin\ell(u_2) \wedge v_1\notin\ell(v_2) \Rightarrow |\ell_{uv}|>3$.
  \end{compactenum}
  We call such a hub labeling \emph{normalized}.
\end{lemma}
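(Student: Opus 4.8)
The plan is to establish each of the four properties by an exchange argument: start from an arbitrary optimal hub labeling and repeatedly modify it without increasing its size until it satisfies all four conditions. The key observation throughout is that the triangular vertices $v_3$ have closed neighborhoods properly contained in those of the adjacent $v_2$, and that $w$ is universal and each $w_i$ is pendant to $w$; this gives us a lot of freedom to substitute one hub for another along shortest paths.

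First I would prove \enuref{obs:subsets}. Suppose $N_G[u]\subset N_G[v]$ and $u\in\ell(z)$ for some $z$. The only shortest paths on which $u$ can be a \emph{useful} hub (i.e.\ the unique internal option) are those through $u$; but for any shortest $a$-$b$-path covered at $u$, replacing that cover by the vertex $v$ works too, since any path through $u$ can be rerouted through $v$ with the same length (as $N_G[u]\subseteq N_G[v]$, $v$ is adjacent to, or equal to, both neighbors of $u$ on the path, and $u$ has degree constraints forcing the path to have length $\le 2$ around $u$) --- here I would need the precise structure: in $G$ the only vertices with this containment property are the $v_3$'s (inside $N_G[v_2]$) and the $w_i$'s (inside $N_G[w]$), so I can just check these two cases directly. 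After rerouting all covers through such a $u$ to go through the dominating vertex $v$, we can delete $u$ from every label, which does not increase the size. Iterating over all such pairs gives \enuref{obs:subsets}; one must argue the process terminates, e.g.\ by a potential counting total assignments to dominated vertices.

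Next, \enuref{obs:universals}: since $w$ lies on a shortest path between almost every pair and $G$ is connected with $w$ universal, I would argue that the cheapest way to cover all pairs $(x,y)$ where no shorter common hub exists is to put $w$ in every label, i.e.\ $\ell^{-1}(w)=V$; more carefully, I would show that given \enuref{obs:subsets} the total size forces this. For the $w_i$'s: each $w_i$ is only on the trivial path $(w_i,w_i)$ and on paths $(w_i, w_j)$ or $(w_i, z)$ which all pass through $w$; by \enuref{obs:subsets} already $w_i\notin\ell(z)$ for $z\ne w_i$ since $N_G[w_i]=\{w_i,w\}\subset N_G[w]$, and the only remaining constraint is reflexivity $w_i\in\ell(w_i)$, together with needing a hub for each pair $(w_i,z)$ which $w$ now supplies. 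So $\ell^{-1}(w_i)=\{w_i\}$. The counting to pin down $\ell^{-1}(w)=V$ exactly (rather than $\supseteq$) will use the budget $k=3\gamma-1$ and the lower bounds $|\ell_v|\ge 2$, $|\ell_{uv}|\ge 3$.

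Finally, \enuref{obs:vertex internal} and \enuref{obs:VC} are contrapositive-style normalizations: if $v_1\in\ell(v_2)$ but $|\ell_v|=2$, I would show directly from \autoref{fig:vg true}/\autoref{fig:vg false} that the two assignments in $\ell_v$ cannot simultaneously cover $(v_1,v_2)$, $(v_2,v_3)$, and $(v_1,v_3)$ while also using $v_1\in\ell(v_2)$ --- a small finite case check --- so the hypothesis is vacuous, i.e.\ the implication holds automatically; similarly for edges, if neither $u_1\in\ell(u_2)$ nor $v_1\in\ell(v_2)$, I check that covering the three paths $(u_i,v_i)$ needs at least four cross-assignments (again a finite check over the $3\times3$ possible cross-pairs, ruling out the 3-assignment configurations). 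The main obstacle is \enuref{obs:subsets} and the accompanying global counting for \enuref{obs:universals}: one must be careful that rerouting covers through the dominating vertex does not create new uncovered pairs or force extra assignments elsewhere, and that the termination/potential argument is airtight; once the labels are ``cleaned'' of dominated vertices and $w$ is forced into every label, properties \enuref{obs:vertex internal} and \enuref{obs:VC} follow from purely local, bounded-size case analysis within a single gadget or across a single edge.
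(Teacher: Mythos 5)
Your plan for \enuref{obs:VC} contains a step that fails: the three ``diagonal'' pairs $(u_1,v_1)$, $(u_2,v_2)$, $(u_3,v_3)$ do \emph{not} force four cross-assignments. Even under the hypothesis $u_1\notin\ell(u_2)$ and $v_1\notin\ell(v_2)$, the three assignments $v_1\in\ell(u_1)$, $u_2\in\ell(v_2)$, $u_3\in\ell(v_3)$ cover all three of these pairs (each is covered at an endpoint hub together with the reflexive assignment), so your finite check over these paths cannot rule out $|\ell_{uv}|=3$. The hypothesis only bites on the \emph{off-diagonal} pairs: for $(u_2,v_1)$ the interior hub $u_1$ is unavailable because $u_1\notin\ell(u_2)$, for $(u_1,v_2)$ the hub $v_1$ is unavailable because $v_1\notin\ell(v_2)$, and for $(u_2,v_2)$ both interior hubs are excluded; these, together with $(u_1,v_1)$, are four pairs whose covering assignments involve pairwise disjoint vertex pairs, which is what yields $|\ell_{uv}|\geq 4$ (this is the paper's choice of paths). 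So the set of paths you check must be changed.

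For \enuref{obs:universals} there are two further problems. First, ``by \enuref{obs:subsets} already $w_i\notin\ell(z)$ for $z\ne w_i$'' is a non sequitur: $N_G[w_i]\subset N_G[w]$ only forbids $w_i\in\ell(w)$; assignments such as $w_i\in\ell(w_j)$ or $w_i\in\ell(v_2)$ are not excluded by \enuref{obs:subsets} and can genuinely cover the pairs $(w_j,w_i)$ and $(v_2,w_i)$. One needs an explicit exchange replacing $w_i$ by $w$ in every label other than $\ell(w_i)$ (sound because every shortest path ending at $w_i$ passes through $w$, and $w\in\ell(w_i)$ is forced once $w_i\notin\ell(w)$); your strengthened version of step \enuref{obs:subsets} would provide this, but then it is that transformation, not property \enuref{obs:subsets} itself, that you must invoke. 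Second, the heart of \enuref{obs:universals} is $\ell^{-1}(w)=V$, and ``the total size forces this'' omits the actual mechanism: since $\{w_i,w\}\subseteq\ell(w_i)$ for all $\gamma$ vertices of $W$ and $3\gamma>k$, some $w_i$ has $\ell(w_i)=\{w_i,w\}$ exactly; since moreover $w_i\notin\ell(z)$ for all $z\ne w_i$, every pair $(z,w_i)$ can then only be covered at $w$, forcing $w\in\ell(z)$ for every $z\in V$. Without this pigeonhole step (or an equivalent) the property is not established, and the lower bounds $|\ell_v|\ge 2$, $|\ell_{uv}|\ge 3$ you propose to use play no role here. Your treatments of \enuref{obs:subsets} and \enuref{obs:vertex internal} are workable, though for \enuref{obs:subsets} a single swap of $u\in\ell(v)$ for $v\in\ell(u)$ already suffices, since the dominated vertices are pendant and hence never interior to any shortest path.
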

%
%\looseness=-1
%\noindent Motivated by \autoref{obs:normalized}\eqref{obs:subsets} and \eqref{obs:universals}, we want to make statements about paths that are not covered with~$w$ and are not length-1 paths~$(u,v)$ with~$N[u]\subset N[v]$ or~$N[v]\subset N[u]$.
%
%For each~$u\in V'$ we call the shortest paths~$(u_2,u_1)$ and~$(u_3,u_2,u_1)$ \emph{$u$-critical} (or just \emph{critical}) since they are unique and avoid~$w$. Likewise, for each~$uv\in E'$ and each~$i\in\{0,1,2\}$, we call the unique shortest $u_i$-$v_i$-path~$p_{uv}^i$ in~$G$ \emph{$u$-$v$-critical} (or just \emph{critical}) since it avoids~$w$.
%
\begin{proof} %[Proof of \autoref{obs:normalized}.]
  Let~$\ell$ be an optimal hub labeling. For each of the properties in \autoref{obs:normalized}, we suppose that~$\ell$ has all previous properties. Then, we transform~$\ell$, in each step achieving one of the properties in \autoref{obs:normalized} without destroying any of the previous properties, or increasing the size of~$\ell$. Thus, the result is a normalized optimal hub labeling.
  \begin{compactdesc}
    \item[\eqref{obs:subsets}:]
      Let~$u,v\in V$ with~$N[u]\subset N[v]$, let~$z\in V\setminus\{u\}$, and assume that~$u$ is in a shortest $v$-$z$-path~$p$ of~$G$. Then, $p=(v,u,x,\ldots,z)$. However, since~$x\in N[u]$ implies~$x\in N[v]$, we know that~$p$ is not a shortest path. Thus, removing~$u$ from~$\ell(v)$ uncovers only the path~$(u,v)$, which can then be covered by adding~$v$ to~$\ell(u)$. Clearly, this modification does not destroy \eqref{obs:subsets} for any pair of vertices.

    \item[\eqref{obs:universals}:]
      By \eqref{obs:subsets}, we know that~$W\subseteq\ell^{-1}(w)$.
      Furthermore, each~$x\in W$ occurs only in shortest paths that end with~$x$. Since all these paths contain~$w$, we can replace $x$ with~$w$ in each assignment of~$\ell$ except for the reflexive~$(x,x)$ without loosing \eqref{obs:subsets}.
      Now, since $\sum_{x\in W}3= 3\gamma>k$, there is some~$x\in W$ with~$|\ell(x)|<3$, implying~$\ell(x)=\{x,w\}$. Then, since~$\ell^{-1}(x)=\{x\}$ and all vertices in~$V$ have a shortest path to~$x$, we conclude~$\ell^{-1}(w)=V$.

%    \item[\eqref{obs:v01}:]
%      Assume there is some~$v\in V'$ with $\{v_1,v_2\}\subseteq\ell(v_1)\cap\ell(v_2)$ and note that, by optimality of~$\ell$, the result~$\ell'$ of removing~$v_2$ from~$\ell(v_1)$ is not a hub labeling. Thus, there is a shortest path starting in~$v_1$ and containing~$v_2$ that is not covered by~$\ell'$. But this can only be~$(v_1,v_2)$ or~$(v_1,v_2,v_3)$. Since~$\ell'$ covers~$(v_1,v_2)$ with~$v_1$, adding~$v_1$ to~$\ell'(v_3)$ yields an optimal hub labeling that respects \eqref{obs:subsets} (since~$v_1\in\ell'(v_2)$) and \eqref{obs:universals}.

%    \item[\eqref{obs:v12}]
%      Let~$v\in V'$ with $v_1\notin\ell(v_2)$ and $v_1\in\ell(v_3)$. Then, $v_2\in\ell(v_1)$. Let~$\ell'$ be the result of removing~$v_2$ from~$\ell(v_1)$. By optmiality of~$\ell$, we know that~$\ell'$ is not a hub labeling of~$G$. Thus, there is a shortest path starting in~$v_1$ and containing~$v_2$ that is not covered by~$\ell'$. But this can only be~$(v_1,v_2)$ or~$(v_1,v_2,v_3)$. Sine~$\ell'$ covers~$(v_1,v_2,v_3)$ with~$v_1$, adding~$v_1$ to~$\ell'(v_2)$ yields an optimal hub labeling for~$G$ that respects \eqref{obs:subsets} and \eqref{obs:universals}.
      %, and \eqref{obs:v01}.

    \item[\eqref{obs:vertex internal}]
      Let~$v_1\in\ell(v_2)$ for some~$v\in V'$. %Then, by \eqref{obs:v01}, $v_2\notin\ell(v_1)$.
      Then, to cover the shortest paths
      $(v_2,v_3)$ and $(v_1,v_2,v_3)$, %, $(v_i,v_i)$ for each~$i\in[3]$.
      $\ell_v$ contains two different assignments, each of which differs from~$(v_2,v_1)$ (see \autoref{fig:vg true}).
%      Since~$\ell$ covers~$(v_2,v_3)$ we have~$\ell\cap\{(v_2,v_3),(v_3,v_2)\}\ne\emptyset$.
%      Since~$\ell$ covers~$(v_1,v_2,v_3)$ we have~$\ell\cap\{(v_1,v_3),(v_3,v_1)\}\ne\emptyset$.
%      Thus, $\ell$ contains~$(v_i,v_i)$ for each~$i\leq 2$, one of~$(v_1,v_3)$ and~$(v_3,v_1)$, one of~$(v_2,v_3)$ and~$(v_3,v_2)$, and~$(v_2,v_1)$.
      Thus, we conclude~$|\ell_v|>2$.

    \item[\eqref{obs:VC}]
      Let~$uv\in E'$ such that~$u_1\notin\ell(u_2)$ and~$v_1\notin\ell(v_2)$.
      Then, for each of the following shortest paths, $\ell_{uv}$ contains a different assignment (see \autoref{fig:eg false}): $(u_1,v_1)$, $(u_2,u_1,v_1)$, $(u_1,v_1,v_2)$, $(u_2,u_1,v_1,v_2)$. %, $(u_3,u_2,u_1,v_1,v_2,v_3)$.
      Thus, we conclude~$|\ell_{uv}|>3$.\qedhere
%    \item[\eqref{obs:VC}]
%      Let~$uv\in E'$ such that~$u_1\notin\ell(u_2)$ and~$v_1\notin\ell(v_2)$ and recall that~$|\ell_u|\geq 2$.
%      Furthermore, note that, for each of the following shortest paths, $\ell_{uv}$ contains a different assignment: $(u_1,v_1)$, $(u_2,u_1,v_1)$, $(u_1,v_1,v_2)$, $(u_2,u_1,v_1,v_2)$. Thus, we have~$|\ell_{uv}|>3$.
%      %
%      Then, let~$\ell'$ denote the result of
%      \begin{inparaenum}[(a)]
%        \item removing $\ell_u\cup\ell_{uv}$ from~$\ell$,
%        \item replacing $u_2$ and~$u_3$ by~$u_1$ in~$\ell(x)$ for each~$x\in V\setminus\{u_1,u_2,u_3\}$,
%        \item adding~$u_2$ to~$\ell(u_3)$, and
%        \item adding~$u_1$ to~$\ell(x)$ for each~$x\in\{u_2,u_3,v_1,v_2,v_3\}$.
%      \end{inparaenum}
%      Since~$|\ell_u\cup\ell_{uv}|\geq 6$, we have~$|\ell'|\leq|\ell|$.
%      To see that~$\ell'$ is a hub labeling for~$G$, assume that there is a shortest path~$p=(x,\ldots,y)$ in~$G$ that is not covered by~$\ell'$ and no shortest $x$-$y$-path contains~$w$. Thus, without loss of generality, $x\in\{u_1,u_2,u_3\}$ and $y\in\{y'_1,y'_2,y'_3\}$ for some~$y'\in N_{G'}[u]$.
  \end{compactdesc}
\end{proof}

\def\Cc{\cellcolor{gray!20}}
\def\bw{\boldsymbol{w}}

%Any normalized hub labeling~$\ell$ for~$G$ covers all non-critical shortest paths in a predefined manner which can be read from \autoref{obs:normalized}\eqref{obs:subsets} and \eqref{obs:universals}: for each~$u,v$ with~$uv\notin E'$, there is a shortest $u$-$v$-path cotaining~$w$ in~$G$, so $\ell$ covers it with~$w$.

Intuitively speaking, covering $(v_1,v_2)$ with~$v_1$ for some~$v\in V'$ induces more cost (by \eqref{obs:vertex internal}) and will correspond to choosing~$v$ into a vertex cover of~$G'$. However, by \eqref{obs:VC}, choosing neither~$u$ nor~$v$ for any~$uv\in E'$ also induces more cost, allowing us to just take one of $uv$ into the vertex cover instead.

\begin{figure}[t]
  \centering
  \subfloat[]{
    \begin{tikzpicture}[node distance=6mm]
      \vertgadgettrue{0}{3}{1}{0}
      \vertgadgettrue{0}{0}{2}{0}
      \draw (u1) edge (u2);

      \draw (u2) edge[hub,bend left] (u1);
      \draw (v2) edge[hub] (u1);
      \draw (w2) edge[hub] (u1);
    \end{tikzpicture}
    \label{fig:eg both}
  }
  \hspace{10mm}
  \subfloat[]{
    \begin{tikzpicture}[node distance=6mm]
      \vertgadgettrue{0}{3}{1}{0}
      \vertgadgetfalse{0}{0}{2}{0}
      \draw (u1) edge (u2);
 
      \draw (u2) edge[hub,bend left] (u1);
      \draw (v2) edge[hub] (u1);
      \draw (w2) edge[hub] (u1);
    \end{tikzpicture}
    \label{fig:eg true}
  }
  \hspace{10mm}
  \subfloat[]{
    \begin{tikzpicture}[node distance=6mm]
      \vertgadgetfalse{0}{3}{1}{0}
      \vertgadgetfalse{0}{0}{2}{0}
      \draw (u1) edge (u2);

      \draw (u2) edge[hub,bend left] (u1);
      \draw (v2) edge[hub] (v1);
      \draw (v2) edge[hub] (u1);
      \draw (v1) edge[hub] (u2);
      %\draw (w2) edge[hub] (v1); % fourth arc if needed
    \end{tikzpicture}
    \label{fig:eg false}
  }

  \caption{\looseness=-1 Illustration of a gadget corresponding to some edge~$uv\in E'$ ($u$ on top, $v$ below) with a possible partial solution~$\ell_{uv}$ (dashed arcs). The figures correspond to different vertex covers of~$G$: choosing both~$u$ and~$v$ (\autoref{fig:eg both}) or choosing~$u$ but not~$v$ (\autoref{fig:eg true}). Choosing neither~$u$ nor~$v$ (\autoref{fig:eg false}) causes additional assignments in~$\ell_{uv}$, implying~$|\ell_{uv}|>3$ (see \autoref{obs:normalized}\eqref{obs:VC}).}
  \label{fig:edge gadget}
\end{figure}
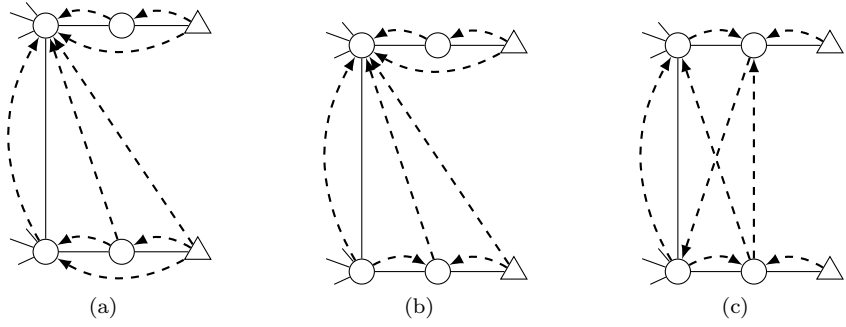

\def\rawl{\ensuremath{\ell_{\text{raw}}}}

\begin{lemma}\label{lem:VC is HL}
  Let~$X$ be a size-$k'$ vertex-cover of~$G'$. Then, there is a size-$k$ hub labeling for~$G$.
\end{lemma}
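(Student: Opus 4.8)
The plan is to write down an explicit hub labeling~$\ell$ of~$G$ directly from the vertex cover~$X$, and then to check both that $|\ell|=k=3\gamma-1$ and that~$\ell$ is a hub labeling; the labeling just realises the partial solutions sketched in \autoref{fig:vertex gadget} and \autoref{fig:edge gadget}. Concretely, I would make~$\ell$ reflexive, put the universal vertex~$w$ into~$\ell(z)$ for every~$z\in V$, and add nothing further involving~$w$ or the vertices of~$W$ (so $\ell(w)=\{w\}$ and $\ell(x)=\{x,w\}$ for $x\in W$). Inside the gadget of a vertex $v\in V'$ I add the two assignments $v_2\in\ell(v_1)$ and $v_2\in\ell(v_3)$ when $v\notin X$ (the ``false'' gadget, \autoref{fig:vg false}), and the three assignments $v_1\in\ell(v_2)$, $v_1\in\ell(v_3)$, $v_2\in\ell(v_3)$ when $v\in X$ (the ``true'' gadget, \autoref{fig:vg true}). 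For every edge $uv\in E'$ I fix one endpoint lying in~$X$ --- one exists because~$X$ is a vertex cover --- say~$u$, and add $u_1\in\ell(v_1)$, $u_1\in\ell(v_2)$, $u_1\in\ell(v_3)$.

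To see that~$\ell$ is a hub labeling, I would go over all pairs of vertices. Whenever $a\in W\cup\{w\}$ or $b\in W\cup\{w\}$, or whenever $a=u_i$ and $b=v_j$ lie in the gadgets of two distinct non-adjacent vertices of~$G'$, the vertex~$w$ lies on a shortest $a$-$b$-path (it is universal, and in the latter case every $u_i$-$v_j$-path is forced through~$w$), and $w\in\ell(a)\cap\ell(b)$ by construction; hence these pairs are covered by~$w$. For the three non-reflexive pairs $(v_1,v_2),(v_2,v_3),(v_1,v_3)$ inside a single gadget, the shortest path is unique, and one checks directly that~$v_2$ (``false'' case) or, respectively, $v_1$ or~$v_2$ (``true'' case) lies on it and in both labels. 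Finally, for a pair $u_i,v_j$ with $uv\in E'$ and~$u$ the endpoint chosen into~$X$, the vertex~$u_1$ is on the unique shortest $u_i$-$v_j$-path; it is in~$\ell(u_i)$ because $u\in X$ made the gadget of~$u$ ``true'' (giving $u_1\in\ell(u_2)\cap\ell(u_3)$) together with reflexivity, and it is in~$\ell(v_j)$ by the edge assignments; so~$u_1$ is a valid hub.

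It remains to count. Reflexivity contributes $|V|=1+\gamma+3|V'|$ assignments; placing~$w$ in every other label contributes $|V|-1=\gamma+3|V'|$; the gadget of a vertex of~$V'$ contributes~$2$ further assignments if the vertex is outside~$X$ and~$3$ if it is in~$X$, i.e.\ $2|V'|+k'$ in total; and each of the~$|E'|$ edge gadgets contributes~$3$. No other assignments are made, so $|\ell|=1+2\gamma+8|V'|+3|E'|+k'$. Since $\gamma=8|V'|+3|E'|+k'+2$, this equals $1+2\gamma+(\gamma-2)=3\gamma-1=k$.

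I expect the only subtle points to be the arithmetic --- the value of~$\gamma$ is chosen precisely so that the count lands on $3\gamma-1$ --- and checking global consistency of the gadget assignments, since each vertex gadget must simultaneously cover its own three internal pairs and (through~$v_1$, when $v\in X$) act as the hub side of every edge gadget incident to it. The rest is a routine case analysis on a graph all of whose relevant shortest paths are unique.
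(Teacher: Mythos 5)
Your labeling is precisely the one the paper constructs (reflexive pairs, $w$ in every label, the ``true''/``false'' vertex gadgets of \autoref{fig:vg true} and \autoref{fig:vg false}, and edge assignments pointing to the $\cdot_1$-vertex of a cover endpoint chosen by a tie-breaking function), and your coverage check and size count match the paper's argument, so the proof is correct and takes essentially the same route. One small overclaim: for non-adjacent $u,v\in V'$ it is not true that \emph{every} $u_i$-$v_j$-path goes through $w$ (a common neighbour in $G'$ yields another shortest path), but the fact you actually use --- that \emph{some} shortest $u_i$-$v_j$-path contains $w$ --- does hold, exactly as in the paper.
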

\begin{proof}
  Let~$f:E'\to V'$ be a function mapping each~$uv\in E'$ to some vertex in~$uv\cap X$. We will use $f$ to ``break ties'' between two vertices in~$X$ that are adjacent in~$G'$. Let~$\ell',\ell''$ be assignments such that (see \autoref{fig:vg true}, \ref{fig:vg false}, and \ref{fig:edge gadget})
  \begin{compactenum}
    \item for all~$v\in X$ and~$i\in[3]$, set~$\ell'(v_i)=\{v_j\mid 1\leq j<i\}$,
    \item for all~$v\in V'\setminus X$, set~$\ell'(v_1)=\ell'(v_3)=\{v_2\}$ and~$\ell'(v_2)=\emptyset$, and
    \item for all~$uv\in E'$ with~$f(uv)=v$ and all~$i\in[3]$, set~$\ell''(u_i)=\{v_1\}$.
  \end{compactenum}
  Then, let~$\ell:=\ell'\cup\ell''\cup\{(x,x),(x,w)\mid x\in V\}$.
  Since~$|X|\leq k'$, we have~
  \begin{align*}
    |\ell'\cup\ell''| & =  3|X|+2|V'\setminus X|+3|E'|\leq 2|V'|+3|E'|+k'\text{, implying}\\
    |\ell| & \leq 2|V'|+3|E'|+k'+2(3|V'|+\gamma)+1= %8|V'|+3|E'|+k'+2\gamma+1=
        3\gamma-1
  \end{align*}
  In the following, we show that all shortest paths of~$G$ are covered by~$\ell$.
  %Clearly, since~$\ell^{-1}(w)=V$, all shortest paths containing~$w$ are covered by~$\ell$.
  First, all shortest paths of length 0 are covered by~$\ell$.
  Second, for each~$v\in V'\setminus X$, the shortest paths~$(v_1,v_2)$, $(v_2,v_3)$, and~$(v_1,v_2,v_3)$ are covered with~$v_2$ and, for each~$v\in X$, the shortest paths~$(v_1,v_2)$ and~$(v_1,v_2,v_3)$ are covered with~$v_1$ and the shortest path~$(v_2,v_3)$ is covered with~$v_2$.
  Third, for each~$uv\in E'$ with $f(uv)=v$ and all~$i,j\in[3]$, the unique shortest $u_i$-$v_j$-path in~$G'$ is covered with~$v_1$.
  Since for each~$uv\notin E'$ and all~$i,j\in[3]$, there is a shortest $u_i$-$v_j$-path containing~$w$ and all shortest paths containing~$w$ are covered with~$w$, we conclude that~$\ell$ is indeed a hub labeling for~$G$ and its cost is at most~$3\gamma-1=k$.
\end{proof}

\begin{lemma}\label{lem:HL is VC}
  Let~$\ell$ be a normalized optimal hub labeling for~$G$ and let~$|\ell|\leq k$.
  Then, there is a size-$k'$ vertex cover for~$G'$.
\end{lemma}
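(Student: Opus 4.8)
The plan is to combine the cheap lower bounds on $|\ell_v|$ and $|\ell_{uv}|$ (noted just before \autoref{obs:normalized}) with the assignments that normalization forces to exist, bound the number of ``surplus'' assignments by $k'$, and then read a vertex cover off the gadgets where the surplus occurs.

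First I would count the forced assignments. Reflexivity supplies the $|V|$ pairs $(x,x)$, and \autoref{obs:normalized}\eqref{obs:universals} supplies the $|V|$ pairs $(x,w)$; these two families overlap only in $(w,w)$, so together they contribute $2|V|-1 = 2\gamma+6|V'|+1$ assignments (using $|V| = 1+\gamma+3|V'|$). For each $v\in V'$ the set $\ell_v$ is disjoint from both families and, over distinct $v$, the sets $\ell_v$ are pairwise disjoint; likewise the sets $\ell_{uv}$ over $uv\in E'$ are pairwise disjoint and disjoint from every $\ell_v$ and from the two forced families. Put $S := \{v\in V'\mid |\ell_v|>2\}$ and $T := \{uv\in E'\mid |\ell_{uv}|>3\}$. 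Since $|\ell_v|\ge 2$ and $|\ell_{uv}|\ge 3$ always, the disjoint-union bound gives
\[
  |\ell| \;\ge\; (2|V|-1) + \sum_{v\in V'}|\ell_v| + \sum_{uv\in E'}|\ell_{uv}| \;\ge\; (2\gamma+6|V'|+1) + (2|V'|+|S|) + (3|E'|+|T|).
\]
Combining with $|\ell|\le k = 3\gamma-1$ and the choice $\gamma = 8|V'|+3|E'|+k'+2$ yields $|S|+|T| \le \gamma - 8|V'| - 3|E'| - 2 = k'$.

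Next I would build the vertex cover. For each $uv\in T$ choose an arbitrary endpoint $x_{uv}\in\{u,v\}$ and set $X := S \cup \{x_{uv}\mid uv\in T\}$, so $|X|\le |S|+|T|\le k'$. To see that $X$ covers an arbitrary edge $uv\in E'$: if $uv\in T$ then $x_{uv}\in X$ covers it. If $uv\notin T$ then $|\ell_{uv}|=3$, so the contrapositive of \autoref{obs:normalized}\eqref{obs:VC} forces $u_1\in\ell(u_2)$ or $v_1\in\ell(v_2)$; by \autoref{obs:normalized}\eqref{obs:vertex internal} the corresponding endpoint satisfies $|\ell_u|>2$ (resp.\ $|\ell_v|>2$), hence lies in $S\subseteq X$. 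Thus $X$ is a vertex cover of $G'$; if $|X|<k'$ one pads it with arbitrary vertices (or takes $X=V'$ when $|V'|\le k'$) to reach size exactly $k'$.

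The only delicate point is the bookkeeping: verifying the pairwise disjointness of the four families of assignments so that the lower bound on $|\ell|$ is legitimate, and checking that the constants in $\gamma$ make the final inequality close to exactly $k'$ rather than something weaker. I do not expect a real obstacle here, since normalization (\autoref{obs:normalized}) and the slack hard-coded into $\gamma$ were designed precisely so that this counting argument goes through.
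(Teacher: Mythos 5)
Your proof is correct and follows essentially the same route as the paper: decompose $\ell$ into the forced reflexive/$w$-assignments plus the pairwise disjoint gadget sets $\ell_v$ and $\ell_{uv}$, use the lower bounds $|\ell_v|\ge 2$, $|\ell_{uv}|\ge 3$ together with $|\ell|\le 3\gamma-1$ to bound the surplus by $k'$, and then invoke \autoref{obs:normalized}\eqref{obs:vertex internal} and \eqref{obs:VC} to show the surplus gadgets yield a vertex cover. The only cosmetic difference is that you count surplus edges and pick endpoints afterwards, whereas the paper applies the tie-breaking function $f$ up front to define $X_2$; the counting and the covering argument are otherwise identical.
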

\begin{proof}
  First, let~$\ell_1:=\{(x,x),(x,w)\mid x\in V\}$ and note that, by \autoref{obs:normalized}\eqref{obs:universals}, $\ell_1\subseteq\ell$.
  Then, $|\ell\setminus\ell_1|\leq 3\gamma-1 - 2\cdot(3|V'|+\gamma)-1=\gamma-6|V'|-2=2|V'|+3|E'|+k'$.

  To break ties, let~$f:E'\to V'$ be an arbitrary function with~$f(uv)\in uv$ for each~$uv\in E'$.
  Let~$X_1:=\{v \in V'\mid |\ell_v|>2\}$ and~$X_2:=\{f(uv) \in V'\mid |\ell_{uv}|>3\}$. Since, for each~$v\in V'$, we have~$|\ell_v|\geq 2$ and~$\ell_v\cap\ell_1=\emptyset$ and, for each~$uv\in E'$, we have~$|\ell_{uv}|\geq 3$, $\ell_{uv}\cap\ell_1=\emptyset$, and~$(\ell_u\cup\ell_v)\cap\ell_{uv}=\emptyset$, we conclude
  \begin{align*}
    |\ell\setminus\ell_1| & \geq \sum_{v\in V'}|\ell_v|+\sum_{uv\in E'}|\ell_{uv}|\\
                          & \geq 3|X_1|+2(|V'|-|X_1|)+4|X_2|+3(|E'|-|X_2|)\\
                          & =2|V'|+3|E'|+|X_1|+|X_2|,
  \end{align*}
  directly implying~$|X_1\cup X_2|\leq k'$.

  To see that~$X_1\cup X_2$ is a vertex cover for~$G'$, assume there is some~$uv\in E'$ such that~$uv\cap(X_1\cup X_2)=\emptyset$.
  Then, since~$u,v\notin X_1$, \autoref{obs:normalized}\eqref{obs:vertex internal} implies~$u_1\notin\ell(u_2)$ and~$v_1\notin\ell(v_2)$.
  But, since~$u,v\notin X_2$, \autoref{obs:normalized}\eqref{obs:VC} implies~$u_1\in\ell(u_2)$ or~$v_1\in\ell(v_2)$, a contradiction.
\end{proof}

With \autoref{obs:normalized}, \autoref{lem:VC is HL} and \autoref{lem:HL is VC} imply the main theorem which, since \VCfull is NP-hard on planar graphs, holds also for apex graphs.
\begin{theorem}
  \OHLfull is NP-complete, even on apex graphs.
\end{theorem}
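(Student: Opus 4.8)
The plan is to assemble the three lemmas already proved into the two directions of a polynomial-time reduction from \VCfull, add a routine membership argument, and check that the construction disturbs planarity by at most one vertex.

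First I would settle membership in NP. A candidate hub labeling~$\ell$ is an object of size polynomial in~$|V|$, and to verify it one runs a breadth-first search from every vertex of~$G$ to obtain all pairwise distances, and then, for each pair~$u,v\in V$, checks whether some~$x\in\ell(u)\cap\ell(v)$ satisfies~$d(u,x)+d(x,v)=d(u,v)$; this, together with the test~$|\ell|\le k$, runs in polynomial time. Hence \OHL is in NP.

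Next, the hardness. Given an instance~$(G',k')$ of \VCfull, I apply \autoref{const:reduction} to obtain~$(G,k)$; since~$\gamma=8|V'|+3|E'|+k'+2$ is polynomially bounded and~$G$ has~$|W|+1+3|V'|$ vertices and a comparable number of edges, this is a polynomial-time transformation. For correctness I would prove the two implications separately. If~$G'$ has a vertex cover of size~$k'$, then \autoref{lem:VC is HL} directly produces a hub labeling of~$G$ of size at most~$k$, so~$(G,k)$ is a yes-instance. Conversely, suppose~$(G,k)$ is a yes-instance. Then~$G$ admits at least one hub labeling of size at most~$k$; since hub labelings of~$G$ exist and have positive integer sizes, an optimal (minimum-size) hub labeling exists, and it has size at most~$k$. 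By \autoref{obs:normalized} I may assume this optimal hub labeling~$\ell$ is normalized without increasing its size, so~$|\ell|\le k$ still holds, and \autoref{lem:HL is VC} then yields a vertex cover of~$G'$ of size~$k'$. Combining the two directions, $(G',k')$ is a yes-instance of \VCfull if and only if~$(G,k)$ is a yes-instance of \OHL; with membership in NP this gives NP-completeness.

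Finally, for the strengthening to apex graphs I would recall that \VCfull is NP-hard already on planar input graphs~$G'$, and observe that the graph~$G-w$ obtained from \autoref{const:reduction} is the disjoint union of the~$\gamma$ isolated vertices of~$W$ with the graph that results from~$G'$ (under the renaming~$v\mapsto v_1$) by attaching a pendant path~$v_1v_2v_3$ at each vertex. Attaching pendant vertices and adding isolated vertices both preserve planarity, so~$G-w$ is planar whenever~$G'$ is, and hence~$G$ is apex. Therefore \OHL is NP-complete on apex graphs.

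I do not expect a real obstacle, since the content of the reduction is carried entirely by the three lemmas. The only point requiring a little care is the backward direction: one must not invoke \autoref{lem:HL is VC} on an arbitrary size-$\le k$ hub labeling, but first pass to a \emph{normalized optimal} one — and this is exactly what \autoref{obs:normalized} supplies, together with the trivial remark that replacing a labeling by an optimal one can only decrease its size.
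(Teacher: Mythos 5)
Your proposal is correct and follows essentially the same route as the paper: the theorem is obtained by combining \autoref{obs:normalized}, \autoref{lem:VC is HL}, and \autoref{lem:HL is VC} for the two directions of the reduction, with the NP-membership check via all-pairs shortest-path distances and the observation that $G-w$ is planar (pendant paths and isolated vertices added to a planar $G'$), so $G$ is apex. Your explicit care in the backward direction (passing to a normalized \emph{optimal} labeling before invoking \autoref{lem:HL is VC}) is exactly the intended use of \autoref{obs:normalized}.
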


%\section{Conclusion}

% slightly space-tuned bibliography
\let\oldthebibliography=\thebibliography
\let\endoldthebibliography=\endthebibliography
\renewenvironment{thebibliography}[1]{%
  \begin{oldthebibliography}{#1}%
    \setlength{\parskip}{0ex}%
    \setlength{\itemsep}{0ex}%
    \footnotesize
}{
    \end{oldthebibliography}%
}
\bibliographystyle{abbrvnat}
\bibliography{hl}

\newcommand{\noopsort}[1]{}
\begin{thebibliography}{13}
\providecommand{\natexlab}[1]{#1}
\providecommand{\url}[1]{\texttt{#1}}
\expandafter\ifx\csname urlstyle\endcsname\relax
  \providecommand{\doi}[1]{doi: #1}\else
  \providecommand{\doi}{doi: \begingroup \urlstyle{rm}\Url}\fi

\bibitem[Abiteboul et~al.(2001)Abiteboul, Kaplan, and Milo]{conf-AKM01}
S.~Abiteboul, H.~Kaplan, and T.~Milo.
\newblock Compact labeling schemes for ancestor queries.
\newblock In \emph{Proceedings of the 12th Annual ACM-SIAM Symposium on
  Discrete Algorithms (SODA'01)}, pages 547--556. Society for Industrial and
  Applied Mathematics, 2001.

\bibitem[Abraham et~al.(2010)Abraham, Fiat, Goldberg, and Werneck]{conf-AFGW10}
I.~Abraham, A.~Fiat, A.~V. Goldberg, and R.~F.~F. Werneck.
\newblock Highway dimension, shortest paths, and provably efficient algorithms.
\newblock In \emph{Proceedings of the 21st Annual ACM-SIAM Symposium on
  Discrete Algorithms (SODA'10)}, pages 782--793. Society for Industrial and
  Applied Mathematics, 2010.

\bibitem[Abraham et~al.(2011)Abraham, Delling, Goldberg, and
  Werneck]{conf-ADGW11}
I.~Abraham, D.~Delling, A.~V. Goldberg, and R.~F. Werneck.
\newblock A hub-based labeling algorithm for shortest paths in road networks.
\newblock In \emph{Proceedings of the 10th International Conference on
  Experimental Algorithms (SEA'11)}, volume 6630 of \emph{Lecture Notes in
  Computer Science}, pages 230--241. Springer, 2011.

\bibitem[Bauer et~al.(2010)Bauer, Columbus, Katz, Krug, and
  Wagner]{report-BCKKW10}
R.~Bauer, T.~Columbus, B.~Katz, M.~Krug, and D.~Wagner.
\newblock Preprocessing speed-up techniques is hard.
\newblock Technical Report 2010-04, ITI Wagner, Faculty of Informatics,
  Universität Karlsruhe, 2010.
\newblock URL \url{http://digbib.ubka.uni-karlsruhe.de/volltexte/1000016080}.

\bibitem[Bazzaro and Gavoille(2005)]{journal-BG05}
F.~Bazzaro and C.~Gavoille.
\newblock Distance labeling for permutation graphs.
\newblock \emph{Electronic Notes in Discrete Mathematics}, 22:\penalty0
  461--467, 2005.

\bibitem[Chepoi et~al.(2006)Chepoi, Dragan, and Vaxès]{journal-CDV06}
V.~Chepoi, F.~F. Dragan, and Y.~Vaxès.
\newblock Distance and routing labeling schemes for non-positively curved plane
  graphs.
\newblock \emph{Journal of Algorithms}, 61\penalty0 (2):\penalty0 60--88, 2006.

\bibitem[Cohen et~al.(2003)Cohen, Halperin, Kaplan, and Zwick]{journal-CHKZ03}
E.~Cohen, E.~Halperin, H.~Kaplan, and U.~Zwick.
\newblock Reachability and distance queries via 2-hop labels.
\newblock \emph{SIAM Journal on Computing}, 32\penalty0 (5):\penalty0
  1338--1355, 2003.

\bibitem[Gavoille and Paul(2003{\natexlab{a}})]{conf-GP03}
C.~Gavoille and C.~Paul.
\newblock Optimal distance labeling for interval and circular-arc graphs.
\newblock In \emph{Proceedings of the 11th Annual European Symposium on
  Algorithms (ESA'03)}, volume 2832 of \emph{Lecture Notes in Computer
  Science}, pages 254--265. Springer, 2003{\natexlab{a}}.

\bibitem[Gavoille and Paul(2003{\natexlab{b}})]{journal-GP03}
C.~Gavoille and C.~Paul.
\newblock Distance labeling scheme and split decomposition.
\newblock \emph{Discrete Mathematics}, 273\penalty0 (1-3):\penalty0 115--130,
  2003{\natexlab{b}}.

\bibitem[Gavoille and Peleg(2003)]{journal-GPel03}
C.~Gavoille and D.~Peleg.
\newblock Compact and localized distributed data structures.
\newblock \emph{Distributed Computing}, 16\penalty0 (2-3):\penalty0 111--120,
  2003.

\bibitem[Gavoille et~al.(2004)Gavoille, Peleg, P{\'e}rennes, and
  Raz]{journal-GPPR04}
C.~Gavoille, D.~Peleg, S.~P{\'e}rennes, and R.~Raz.
\newblock Distance labeling in graphs.
\newblock \emph{Journal of Algorithms}, 53\penalty0 (1):\penalty0 85--112,
  2004.

\bibitem[Peleg(2000)]{journal-Pel00}
D.~Peleg.
\newblock Proximity-preserving labeling schemes.
\newblock \emph{Journal of Graph Theory}, 33\penalty0 (3):\penalty0 167--176,
  2000.

\bibitem[Thorup and Zwick(2005)]{journal-TZ05}
M.~Thorup and U.~Zwick.
\newblock Approximate distance oracles.
\newblock \emph{Journal of the ACM}, 52\penalty0 (1):\penalty0 1--24, 2005.

\end{thebibliography}

\end{document}